
\documentclass[11pt]{amsart}
\usepackage{braket,amsfonts,amssymb,mathtools,stmaryrd, graphicx,a4,bbm,mdframed,mathrsfs,hyperref}
\graphicspath{{./figures/}}
\usepackage{float}
\usepackage{tikz, tkz-euclide}
\usepackage{tikz-cd}
\usepackage[title]{appendix}
\usetikzlibrary{calc}
\usetikzlibrary{arrows}
\usetikzlibrary{positioning}
\usetikzlibrary{decorations.markings}
\usetikzlibrary{tikzmark}
\usetikzlibrary{tikzmark}
\usepackage[normalem]{ulem}
\usepackage{euscript} 
\usepackage{color}\definecolor{Red}{cmyk}{0,1,1,0}
\usepackage{enumitem}

\newcommand{\spann}{span}

\newcommand{\Var}[0]{\text{Var}}
\newcommand\redsout{\bgroup\markoverwith{\textcolor{red}{\rule[0.5ex]{2pt}{0.4pt}}}\ULon}

\topmargin 0cm
\textheight 22.5cm
\textwidth 16cm
\oddsidemargin 0.2cm
\evensidemargin 0.2cm

\newtheorem{theorem}           {Theorem}

\theoremstyle{definition}

\newtheorem*{theorem*}{Theorem}
\newtheorem*{conjecture*}   {Conjecture}
\newtheorem*{corollary*}   {Corollary}
  
  \newtheorem*{lemma*}          {Lemma}
    \newtheorem*{claim*}          {Claim}
  \newtheorem{definition}[theorem]         {Definition}
  
  \newtheorem{proposition}[theorem]      {Proposition}

  \theoremstyle{definition}

\newcommand{\partiald}[2]{\frac{\partial {#1}}{\partial {#2}}}
\DeclareMathOperator{\sgn}{sgn}
\DeclareMathOperator{\Image}{Im}
\DeclareMathOperator{\SRot}{SRot}

\hyphenation{ge-ne-ra-li-zed par-ti-cu-lar e-ve-ry un-coun-ta-ble}  
  
\begin{document}


\voffset=-1.5truecm\hsize=16.5truecm    \vsize=24.truecm
\baselineskip=14pt plus0.1pt minus0.1pt \parindent=12pt
\lineskip=4pt\lineskiplimit=0.1pt      \parskip=0.1pt plus1pt

\def\ds{\displaystyle}\def\st{\scriptstyle}\def\sst{\scriptscriptstyle}



\global\newcount\numsec\global\newcount\numfor
\gdef\profonditastruttura{\dp\strutbox}
\def\senondefinito#1{\expandafter\ifx\csname#1\endcsname\relax}
\def\SIA #1,#2,#3 {\senondefinito{#1#2}
\expandafter\xdef\csname #1#2\endcsname{#3} \else
\write16{???? il simbolo #2 e' gia' stato definito !!!!} \fi}
\def\etichetta(#1){(\veroparagrafo.\veraformula)
\SIA e,#1,(\veroparagrafo.\veraformula)
 \global\advance\numfor by 1
 \write16{ EQ \equ(#1) ha simbolo #1 }}
\def\etichettaa(#1){(A\veroparagrafo.\veraformula)
 \SIA e,#1,(A\veroparagrafo.\veraformula)
 \global\advance\numfor by 1\write16{ EQ \equ(#1) ha simbolo #1 }}
\def\BOZZA{\def\alato(##1){
 {\vtop to \profonditastruttura{\baselineskip
 \profonditastruttura\vss
 \rlap{\kern-\hsize\kern-1.2truecm{$\scriptstyle##1$}}}}}}
\def\alato(#1){}
\def\veroparagrafo{\number\numsec}\def\veraformula{\number\numfor}
\def\Eq(#1){\eqno{\etichetta(#1)\alato(#1)}}
\def\eq(#1){\etichetta(#1)\alato(#1)}
\def\Eqa(#1){\eqno{\etichettaa(#1)\alato(#1)}}
\def\eqa(#1){\etichettaa(#1)\alato(#1)}
\def\equ(#1){\senondefinito{e#1}$\clubsuit$#1\else\csname e#1\endcsname\fi}
\let\EQ=\Eq


\def\\{\noindent}
\let\io=\infty

\def\VU{{\mathbb{V}}}
\def\EE{{\mathbb{E}}}
\def\GI{{\mathbb{G}}}
\def\TT{{\mathbb{T}}}
\def\C{\mathbb{C}}
\def\LL{{\cal L}}
\def\RR{{\cal R}}
\def\SS{{\cal S}}
\def\NN{{\cal N}}
\def\HH{{\cal H}}
\def\GG{{\cal G}}
\def\PP{{\cal P}}
\def\AA{{\cal A}}
\def\BB{{\cal B}}
\def\FF{{\cal F}}
\def\vv{\vskip.2cm}
\def\gt{{\tilde\g}}
\def\E{{\mathcal E} }
\def\I{{\rm I}}

\def\cal{\mathcal}

\def\tende#1{\vtop{\ialign{##\crcr\rightarrowfill\crcr
              \noalign{\kern-1pt\nointerlineskip}
              \hskip3.pt${\scriptstyle #1}$\hskip3.pt\crcr}}}
\def\otto{{\kern-1.truept\leftarrow\kern-5.truept\to\kern-1.truept}}
\def\arm{{}}
\font\bigfnt=cmbx10 scaled\magstep1

\newcommand{\card}[1]{\left|#1\right|}
\newcommand{\und}[1]{\underline{#1}}
\newcommand{\Dom}[0]{\text{Dom}}
\def\1{\rlap{\mbox{\small\rm 1}}\kern.15em 1}
\def\ind#1{\1_{\{#1\}}}
\def\bydef{:=}
\def\defby{=:}
\def\buildd#1#2{\mathrel{\mathop{\kern 0pt#1}\limits_{#2}}}
\def\card#1{\left|#1\right|}
\def\proofof#1{\noindent{\bf Proof of #1. }}
\def\trp{\mathbb{T}}
\def\trt{\mathcal{T}}

\def\bfz{\boldsymbol z}
\def\bfa{\boldsymbol a}
\def\bfalpha{\boldsymbol\alpha}
\def\bfmu{\boldsymbol \mu}
\def\bfmust{\bfT^\infty(\bfmu)}
\def\bfmupr{\boldsymbol {\widetilde\mu}}
\def\bfrho{\boldsymbol \rho}
\def\bfrhost{\boldsymbol \rho^*}
\def\bfrhopr{\boldsymbol {\widetilde\rho}}
\def\bfT{{\boldsymbol T}_{\!\!\bfrho}}
\def\bfR{\boldsymbol R}
\def\bfvarphi{\boldsymbol \varphi}
\def\bfvarphist{\boldsymbol \varphi^*}
\def\bfPi{\boldsymbol \Pi}
\def\bfzero{\boldsymbol 0}
\def\bfW{\boldsymbol W}
\def\formal{\stackrel{\rm F}{=}{}}
\def\eee{{\rm e}}
\def\nnn{\mathcal N}
\def\nst{\nnn^*}
\def\Var{\text{Var}}

\thispagestyle{empty}

\begin{center}
{\LARGE Localizability in de Sitter space for 2+1 dimensions.}
\vskip.5cm
Thiago Raszeja$^{1,2}$, Jo\~ao C. A. Barata$^{1}$ 
\vskip.3cm
\begin{footnotesize}
$^{1}$Institute of Physics of the University of S\~ao Paulo (IFUSP), University of S\~{a}o Paulo, Brazil\\
$^{2}$ Faculty of Applied Mathematics, AGH University of Krakow, Poland
\end{footnotesize}
\vskip.1cm
\begin{scriptsize}
emails: tcraszeja@gmail.com; jbarata@if.usp.br
\end{scriptsize}

\end{center}

\def\be{\begin{equation}}
\def\ee{\end{equation}}

\vskip1.0cm
\begin{quote}
{\small

\textbf{Abstract.} \begin{footnotesize} We extended the notion of Newton-Wigner localization, already constructed in the bi-dimensional de Sitter space, to the tri-dimensional case for both principal and complementary series. We identify the one-particle subspace, generated by the positive-energy modes solution of the Klein-Gordon equation, as an irreducible representation of the de Sitter group. The time-evolution of the Newton-Wigner operator was obtained explicitly, and for the complementary series the evolution is trivial, i.e., there are no dynamics. Also, we discussed heuristically the existing sign ambiguity when we do not require as a postulate that the Newton-Wigner functions must be proportional to their respective solutions in the representation of solutions of the Klein-Gordon equation.
\end{footnotesize}

}
\end{quote}

\numsec=2\numfor=1
\section*{Introduction}
\vv
\noindent
Although it is one of the most fundamental intuitive concepts in physics, the measurement of the position of particles has no obvious mathematical approach in the realm of relativistic quantum mechanics, and it has been developed by many authors over decades \cite{Newton1949, Wightman1962, Philips1964, Kaiser1978, Barata2012}. One of the central topics in quantum field theory (QFT) concerns the interaction among particles and collision processes, where the notion of position is required since the typical experiments are collision experiments measured via detectors, realized in a limited region \cite{Haag1965, Buchholz1986, Schroer2010, Araki1967}. Moreover, position operators are natural objects when we take into account principles that connect classical and quantum systems via classical limit \cite{Hepp1974, Barata2009}.

On the other hand, the concern about creating a unified theory that covers physical phenomena in totality, QFT on curved space-time (QFTCS), as a semi-classical gravitation theory, was responsible for impressive forethoughts, such as the Hawking radiation \cite{Hawking1974,Unruh1981} and the Unruh effect \cite{Fulling1973,Davies1975,Unruh1976}. In particular, the importance of the geometric background in this paper is the following: evidence on the current Universe expansion \cite{Riess1998,Perlmutter1999} and the existence of an inflationary era in the primitive Universe \cite{weinberg2008cosmology,Linde1990,liddle2000} points out a compatibility of the geometry of the Universe with a piece of a de Sitter space. Dynamics on particles in this geometry have been studied in both classical and quantum contexts \cite{Cacciatori2008,Bros2007,Aldrovandi2004}.

Historically, the first relativistic solution for localizability was constructed by T. D. Newton e E. P. Wigner in the Minkowski space \cite{Newton1949}, later revisited in a more rigorous approach in \cite{Wightman1962}, via the establishment of localizability postulates that considers natural invariance arguments in physics, giving a unique position operator, however non-covariant. This problem was solved later in \cite{Philips1964,Kaiser1978}. In a previous work, N. Yokomizo and one of the authors constructed the postulates for the Newton-Wigner (NW) localization on the bi-dimensional de Sitter space for a neutral massive scalar field, leading to a unique solution for a position operator which is well-behaved under the action of the symmetry group $O(2,1)$.

Following similar guidelines of \cite{Barata2012}, we extended the notion of localizability for the tri-dimensional de Sitter space by formulating postulates that carry conditions concerning rotations, parity, and time-reversal operators, and a condition on the large mass limits. These postulates provide a unique family of unitary transformations that carries the Hilbert space of solutions corresponding to the positive energy modes of the Klein-Godon equation to the square-integrable functions on the $2$-sphere. Moreover, we also provided a study on the ambiguity of signals when we do not take into consideration the large mass regularity (postulate IV). An analog ambiguity already occurs in the bi-dimensional case. 

The paper is organized in three main sections.

In section \ref{sec:klein_gordon}, through a convenient choice of coordinates, we describe the Klein-Gordon equation for the tri-dimensional de Sitter space and we solve it. We choose a basis of solutions that separates the positive and the negative modes of energy, and we prove that those solutions have smooth Cauchy conditions, and hence they are well-defined to be in fact a basis for the space.

Section \ref{sec:quantization} presents the quantization of the one-particle Klein-Gordon free field for the positive energy modes and we prove the invariance under $O(3,1)$ of the Hilbert spaces corresponding to the completion of the space generated by these modes. We show explicitly how the parity, time-reversal, main spatial rotations and Lorentz boosts act on each positive mode.

Finally, section \ref{sec:localizability} presents the localizability postulates for a Klein-Gordon for free fields on the de Sitter space $dS^3$, and from them we construct the Newton-Wigner transformations and we prove that it is unique. We also discuss the ambiguity of signals we obtain when we exclude postulate IV and the physical meaning of the choice of these signals.

A substantial amount of the results of this paper are in the Master thesis \cite{Raszeja2015}.

\numsec=2\numfor=1

\section{The Klein-Gordon equation on the tri-dimensional de Sitter space and its solutions} \label{sec:klein_gordon}

\subsection{Local properties of the de Sitter space and the Klein-Gordon equation} The \emph{tri-dimensional de Sitter space} is the Lorentzian submanifold $dS^3 \subset M^4$, where $M^4$ is the $3+1$-dimensional Minkowski space endowed with the metric signature $\eta_{ab} = \text{diag}(1, -1,\cdots, -1)$, and given by 
\begin{equation}
\label{eq:deSitter}
 dS^3 := \{\mathbf{X} \in M^{3+1} : \mathbf{X}^2 = X^a X^b \eta_{ab} = -\alpha^2 \},
\end{equation}
where $\alpha > 0$ is the \emph{de Sitter radius} and the Einstein notation is adopted above. The space $dS^3$ is a hyperboloid of one sheet, whose topology is $S^2 \times \mathbb{R}$, where $S^2$ is the $2$-sphere (spatial part), and its correspondent symmetry group is $O(3,1)$. We adopt in this space the spherical coordinates
\begin{eqnarray}
X^0 &= &\alpha \sinh (t/\alpha); \nonumber \\
X^1 &= &\alpha \cosh (t/\alpha) \sin \theta \cos \varphi; \nonumber \\
X^2 &= &\alpha \cosh (t/\alpha) \sin \theta \sin \varphi; \nonumber \\
X^3 &= &\alpha \cosh (t/\alpha) \cos \theta; \nonumber
\end{eqnarray}
where $t \in \mathbb{R}$, $\theta \in [0, \pi]$ and $ \varphi \in [0, 2 \pi]$. The metric tensor $g$ of the aforementioned choice of coordinates is diagonal, and its non-zero entries are
\begin{equation}
 g_{00} = 1 \text{, } g_{11} = -\alpha^2 \cosh^2(t/\alpha) \text{, } g_{22} = -\alpha^2 \cosh^2(t/\alpha)\sin^2\theta.
\end{equation}
Its volume density is $D_g = \alpha^2 \cosh^2(t/\alpha) \sin \theta$ and hence the D'Alembertian becomes 
\begin{equation*}
\Box = \partial_{tt} + \frac{2}{\alpha} \tanh (t/\alpha) \partial_t - \frac{1}{\alpha^2\cosh^2(t/\alpha)} \left[ \frac{1}{\sin \theta}\partial_{\theta}(\sin \theta \partial_{\theta}) + \frac{1}{\sin^2 \theta} \partial_{\varphi \varphi} \right].
\end{equation*}
Given a submanifold $\Omega \subset M^4$ and $\phi \in C^2(\Omega, \mathbb{C})$, the \emph{Klein-Gordon (KG) equation} is given by
\begin{equation}
\label{eq:KG}
\left( \Box + \frac{m_p^2 + \xi R}{\hslash^2} \right) \phi = 0,
\end{equation}
where $m_p$ is the particle mass, $R$ is the scalar curvatura and $\xi$ is a parameter. In our context, we have $R = \frac{6}{\alpha^2}$. By setting $\mu^2 := m_p^2 + \xi R$, equation \eqref{eq:KG} becomes
\begin{equation}
\label{eq:KG_final}
\left( \Box + \frac{\mu^2}{\hslash^2} \right) \phi = 0.
\end{equation}

\subsection{Solutions of the KG equation on the tri-dimensional de Sitter space: energy modes and their properties}

Now, we solve the KG equation via the separation of variables and find a basis for the space of solutions. By inserting the ansatz
\begin{equation}
\label{eq:ansatz}
\phi (t, \theta, \varphi) = \frac{T(x)}{\sqrt[4]{x^2 - 1}}Y(\theta, \varphi),
\end{equation}
where $x = i \sinh (t/\alpha)$, in \eqref{eq:KG_final}, we obtain the following system of differential equations:
\begin{subequations}
\begin{align}
(1-x^2)T''(x) - 2xT'(x) + \left[ \left( \frac{3}{4} - M^2 \right) - \frac{l(l+1) + 1/4}{1-x^2} \right]T(x) = 0; \label{eq:Legendre} \\
\frac{1}{\sin \theta} \partial_\theta (\sin \theta \partial_\theta Y(\theta, \varphi)) + \frac{1}{\sin^2 \theta} \partial_{\varphi \varphi} Y(\theta, \varphi) + l(l+1)Y(\theta, \varphi) = 0, \label{eq:Laplace2}
\end{align}
\end{subequations}
where $M=\frac{\alpha \mu}{\hslash}$. Equation \eqref{eq:Laplace2} is the angular part of the Laplace equation. Function $Y$ must be continuous, and hence periodic on $S^2$, where we get $l \in \mathbb{N}_0 = \mathbb{N}\cup\{0\}$, then a basis of solutions for \eqref{eq:Laplace2} are the spherical harmonics $Y_l^m$. On the other hand, by writing $\nu(\nu + 1) = \left( \frac{3}{4} - M^2 \right)$ and $\lambda^2 = l(l+1) + \frac{1}{4} = (l+\frac{1}{2})^2$, equation \eqref{eq:Legendre} is the associated Legendre equation,
\begin{equation}
\label{eq:Legendre_geral}
(1-x^2)T''(x) -2xT'(x)+ \left[ \nu(\nu+1) - \frac{\lambda^2}{1-x^2}\right]T(x) = 0,
\end{equation}
where we have
\begin{equation}\label{eq:nu}
\nu = -\frac{1}{2} \pm \sqrt{1 - M^2} \; \text{ and } \; \lambda = \pm \left( l + \frac{1}{2} \right).
\end{equation}
We observe that, for $M^2 \geq 0$ ($\mu^2 \geq 0$), $\nu \in \mathbb{R}$ when $M \in [-1, 1]$, and $\nu = -\frac{1}{2} \pm i\sqrt{M^2-1}$ otherwise. Now, if $M^2 < 0$, then necessarily $\nu \in \mathbb{R}$. In this paper we only consider the case $M^2 >0$, that is, we assume that there is no large negative coupling with the Ricci scalar.

The solutions of \eqref{eq:Legendre_geral} are the \emph{associated Legendre functions} $P_\nu^\lambda$ and $Q_\nu^\lambda$. Observe that these functions are analytic in the whole plane except in the cut in $(-\infty,1]$ (for non-integer indices). We conveniently adapt these solutions via Ferrer's notation and its properties \cite{snow1961}, presented next.

\begin{definition}
\label{def:Ferrer}
Let $\sigma, \nu \in \mathbb{C}$. The function $T_\nu^\sigma:\{z \in \mathbb{C}: |z-1| < 2\} \setminus [1,3) \to \mathbb{C}$, given by
\begin{equation}
\label{eq:TP}
T_\nu^\sigma (z) := e^{\mp i \frac{\sigma \pi}{2}} P_\nu^\sigma (z) \text{, } \pm \Image(z) > 0
\end{equation}
are called \emph{associated Legendre function in Ferrer's notation of degree $\nu$ and order $\sigma$}, or simply \emph{Ferrer functions}.
\end{definition}

The Ferrer functions are also solutions for \eqref{eq:Legendre_geral} and analytic in the whole plane except by its cuts, namely $(-\infty,-1] \cup [1, \infty)$. We choose the solutions
\begin{equation*}
T_\nu^{l+\frac{1}{2}} (i\sinh (t/\alpha)) \text{ and } T_\nu^{l+\frac{1}{2}} (-i\sinh (t/\alpha)).
\end{equation*}
Therefore, a basis of solutions for \eqref{eq:KG_final} consists into the elements
\begin{eqnarray}
u_{l,m}(t, \theta, \varphi) = \sqrt{\frac{\gamma_l}{2 \alpha}} N(\nu)\frac{T_\nu^{l+\frac{1}{2}} (i\sinh (t/\alpha))}{\sqrt{\cosh(t/\alpha)}} Y_l^m(\theta, \varphi); \label{eq:modo_positivo}\\
v_{l,m}(t, \theta, \varphi) = \sqrt{\frac{\gamma_l}{2 \alpha}} N(\nu)\frac{T_\nu^{l+\frac{1}{2}} (-i\sinh (t/\alpha))}{\sqrt{\cosh(t/\alpha)}} Y_l^m(\theta, \varphi), \label{eq:modo_negativo}
\end{eqnarray}
where $N(x) = \mathbbm{1}_{\mathbb{R}} (x) - i \mathbbm{1}_{\{z \in \mathbb{C}: \Image(z) > 0\}}(x) + i \mathbbm{1}_{\{z \in \mathbb{C}: \Image(z) < 0\}}(x)$, where $\mathbbm{1}_A$ is the characteristic function on the set $A$, is a phase, and $\gamma_l = \Gamma(-\nu - l -\frac{1}{2}) \Gamma(\nu - l +\frac{1}{2})$ is a normalization factor. In addition, $l$, $m \in \mathbb{N}_0$ and $-l \leq m \leq l$. The solutions \eqref{eq:modo_positivo} and \eqref{eq:modo_negativo} are interpreted as the \emph{positive} and \emph{negative energy modes} of the Klein-Gordon equation \eqref{eq:KG_final} respectively. We denote by $\mathcal{S}$ the whole space of the solutions of the KG equation, and by $\mathcal{S^+}$ and $\mathcal{S^-}$ the respective subspaces of positive and negative modes, all of them under smooth Cauchy conditions.

\subsection{Regularity properties of the energy modes}

The space complex space $\mathcal{S}$ under smooth Cauchy conditions is given by the vectors $\phi$,
\begin{equation}
\label{eq:vetor_de_s}
\phi (t, \theta, \varphi) = \sum_{l=0}^\infty \sum_{m=-l}^l
\left[ c_{l,m}u_{l,m} + d_{l,m}v_{l,m}\right],
\end{equation}
where the coefficients $c_{l,m}$ satisfy
\begin{equation}
\label{eq:decaimento}
\sum_{l=0}^\infty \sum_{m=-l}^l |c_{l,m}|(l^2 + m^2)^{j/2} < \infty \text{, } \forall j > 0;
\end{equation}
and $d_{l,m}$ satisfies the above by mutatis mutandis. In this subsection, we show that those solutions are well-defined in $\mathcal{S}$, by proving that the series \eqref{eq:vetor_de_s} converges absolute and uniformly to solutions \eqref{eq:KG_final} with smooth initial conditions.

We recall that all the metrics for a finite-dimensional vector space are equivalent, and since the set of indexes $\{(l,m) \in \mathbb{Z}^2: l\geq 0, -l \leq m \leq l \}$ is contained in $\mathbb{R}^2$, there are $R_1$, $R_2 > 0$ s.t.
\begin{align}
R_1 \sum_{l=0}^\infty \sum_{m=-l}^l |c_{l,m}| (\max\{|l|,|m|\})^j \leq \sum_{l=0}^\infty \sum_{m=-l}^l |c_{l,m}|(l^2 + m^2)^{j/2}  \leq  R_2 \sum_{l=0}^\infty \sum_{m=-l}^l |c_{l,m}| (\max\{|l|,|m|\})^j,\label{eq:metrica_equiv}
\end{align}
where we conclude that
\begin{equation}
\label{eq:metrica_sup}
\sum_{l=0}^\infty \sum_{m=-l}^l |c_{l,m}|(l^2 + m^2)^{j/2} \leq R_2 \sum_{l=0}^\infty \sum_{m=-l}^l |c_{l,m}|l^j < \infty .
\end{equation}
Then, the decaying dependence of the coefficients can be written in a simpler way, and straightforward to compare with the bi-dimensional de Sitter case \cite{Barata2012}. Considering only the terms $c_{l,m}$, let the spherical harmonics expansion at fixed time $t$
\begin{equation}
\label{eq:expansao}
H(t, \theta, \varphi) :=  \sum_{l=0}^{\infty}\sum_{m=-l}^{l}p_{l,m}Y_l^m(\theta, \varphi), \: p_{l,m} := c_{l,m} \sqrt{\frac{\gamma_l}{2 \alpha}} N(\nu) \frac{T_\nu^{l+\frac{1}{2}} (i\sinh (t/\alpha))}{\sqrt{\cosh(t/\alpha)}}.
\end{equation}
The asymptotic behavior of the coefficients $p_{l,m}$ is obtained by the identity (see \cite{snow1961})
\begin{equation}
\left|\sqrt{\frac{\gamma_l}{2 \alpha}} N(\nu) T_\nu^{l+\frac{1}{2}} (z)\right| \simeq \frac{1}{\sqrt{2\alpha}} \left|\frac{\Gamma\left(- \nu -l + \frac{1}{2} \right)}{\Gamma\left( \nu -l + \frac{1}{2} \right)} \right|^{\frac{1}{2}}\left|\Gamma\left( \nu -l + \frac{1}{2}  \right)T_\nu^{l+\frac{1}{2}} (z)\right| \simeq \frac{1}{\sqrt{2 \alpha}} l^{-\frac{1}{2}\pm \chi},
\end{equation}
where $0 \leq \chi \leq 1$, and hence,
\begin{equation}
|p_{l,m}| = |c_{l,m}| \frac{l^{-\frac{1}{2}\pm \chi}}{\sqrt{2\alpha \cosh(t/\alpha)}},
\end{equation}
and the rapid decay of $c_{l,m}$, the absolute and uniform convergence in \eqref{eq:expansao} is ensured, and the same arguments presented above also follows for $d_{l,m}$. Moreover, the terms in \eqref{eq:vetor_de_s} are continuous and then $\phi(t,\theta, \varphi)$ is continuous as well.

Now we analyze the derivatives in the form $\partial_\varphi^n \partial_\theta^q \partial_t^2 \phi$, $\{n, q\} \subset \mathbb{N}_0$. Note that $\partial_\varphi^n$ simply gives a factor $(im)^n$, so it does not affect the continuity. On the other hand, $\partial_\theta^q$ acts on the associated Legendre polynomials only, and it is straightforward that the continuity is preserved. It remains to study how $\partial_t^2$ affects $\phi$, i.e., how it acts on $T(t) = \frac{T_\nu^{l+\frac{1}{2}} (i\sinh (t/\alpha))}{\sqrt{\cosh(t/\alpha)}}$. Since $T$ satisfies
\begin{equation}
\label{eq:sem_ansatz}
(1-x^2) \frac{d^2 T}{dx^2} (x) - 3x\frac{d T}{dx}(x) - \left[M^2 + \frac{l(l+1/2)}{1-x^2} \right] T(x) = 0, \text{ } x = i \sinh (t/\alpha),
\end{equation}
and by recurrence relations of the Ferrer functions \cite{snow1961}, in the asymptoptic limit on $l$, one gets
\begin{eqnarray}
\frac{d}{dt} \left(\frac{T_\nu^{l+\frac{1}{2}} (i\sinh (t/\alpha))}{\sqrt{\cosh(t/\alpha)}} \right) \simeq \frac{il^2}{2 \alpha } \frac{T^{l+\frac{1}{2}}_\nu (i \sinh (t/\alpha))}{\sqrt{\cosh(t/\alpha)}},
\end{eqnarray}
\begin{equation}
\label{eq:segunda_derivada}
\frac{d^2}{dt^2} \left(\frac{T_\nu^{l+\frac{1}{2}} (i\sinh (t/\alpha))}{\sqrt{\cosh(t/\alpha)}} \right) \simeq - \frac{l^2}{\alpha^2} \left[ i \frac{\sinh(t/\alpha)}{\cosh (t/\alpha)} + \frac{1}{\cosh^2 (t/\alpha)}   \right] \frac{T^{l+\frac{1}{2}}(i \sinh (t/\alpha))}{\sqrt{\cosh(t/\alpha)}}.
\end{equation}
Therefore, the time derivatives of first and second order include a factor of polynomial order in $l$, and once again the rapid decay of $c_{l,m}$ ensures the convergence of such a derivatives acting on \eqref{eq:expansao}, such results also holds for $d_{l,m}$. Therefore, $\phi$ is a $C^2$-function, and via superposition it is also a solution of the KG equation. Furthermore, for a fixed time $t$, $\phi$ is a smooth function on the $2$-sphere, since any degree of combination of derivatives on $\varphi$ and $\theta$ acting on  \eqref{eq:expansao} preserves its regularity, and by \eqref{eq:segunda_derivada}, the result also holds for $\partial_t \phi$. Therefore, from this point we only consider the smooth Cauchy conditions on $\phi(0,\theta,\varphi)$ and $\partial_t \phi (0,\theta,\varphi)$. Since every function in $C^0(S^2)$ has a convergent Laplace expansion, then $\mathcal{S}$ has every solution with smooth initial coefficients.

The space $\mathcal{S}$ is endowed with the usual Hermitian sesquilinear form in QFTCS, namely 
\begin{equation}
\label{eq:prod_int}
 \braket{f,g} = i [a(t)]^2 \int_{S_t} d\varphi d\theta [(f^*\overleftrightarrow{\partial_t}g)\sin \theta],
\end{equation}
where $S_t$ is a time slice of constant $t$, and $a(t):=\alpha \cosh (t/\alpha)$ is the radius of $S_t$. Observe that $\braket{f^*,g^*} = -\braket{f,g}$. Proposition \ref{prop:prod_int} shows that we can separate $\mathcal{S}$ into the subspaces of positive $\mathcal{S}^+$ and negative $\mathcal{S}^-$ energies for $M \neq 1$ (the case $M = 1$ will not be treated in this work).

\begin{proposition}
\label{prop:prod_int}
The solutions \eqref{eq:modo_positivo} and \eqref{eq:modo_negativo} satisfy
\begin{equation}
\label{eq:deltas}
\braket{u_{l,m},u_{p,s}} = \delta_{lp} \delta_{ms}, \quad  \braket{v_{l,m},v_{p,s}} = - \delta_{lp} \delta_{ms} \quad \text{e} \quad \braket{u_{l,m},v_{p,s}} = 0
\end{equation}
for every $M \neq 1$.
\end{proposition}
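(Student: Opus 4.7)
The plan is to exploit conservation of the Klein-Gordon current: since both $u_{l,m}$ and $v_{p,s}$ solve \eqref{eq:KG_final}, a standard divergence argument shows that the right-hand side of \eqref{eq:prod_int} does not depend on the time slice $S_t$, so one may evaluate at $t=0$, where $\sinh(t/\alpha)=0$, $\cosh(t/\alpha)=1$, and $a(0)=\alpha$. Plugging the factorized forms \eqref{eq:modo_positivo}--\eqref{eq:modo_negativo} into \eqref{eq:prod_int} and using orthonormality of the spherical harmonics on $S^2$ produces the factor $\delta_{lp}\delta_{ms}$ in all three inner products; it then suffices to treat the case $l=p$, $m=s$ and study a purely temporal Wronskian-type expression.

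Writing $\tau_l(z):=T_\nu^{l+1/2}(z)$, $f_l(t):=\tau_l(i\sinh(t/\alpha))/\sqrt{\cosh(t/\alpha)}$, and $g_l(t):=\tau_l(-i\sinh(t/\alpha))/\sqrt{\cosh(t/\alpha)}$, a direct differentiation at $t=0$ yields $f_l(0)=g_l(0)=\tau_l(0)$ and $\partial_t f_l(0)=(i/\alpha)\tau_l'(0)=-\partial_t g_l(0)$. Substituting these values into \eqref{eq:prod_int} and using $|N(\nu)|=1$ (so that $|C_l|^2=|\gamma_l|/(2\alpha)$, with $C_l:=\sqrt{\gamma_l/(2\alpha)}\,N(\nu)$), the three inner products collapse, after cancellation of the angular integral, to $\braket{u_{l,m},u_{l,m}}=-2\alpha|C_l|^2\,\mathrm{Re}[\overline{\tau_l(0)}\,\tau_l'(0)]$, $\braket{v_{l,m},v_{l,m}}=+2\alpha|C_l|^2\,\mathrm{Re}[\overline{\tau_l(0)}\,\tau_l'(0)]$, and $\braket{u_{l,m},v_{l,m}}=2i\alpha|C_l|^2\,\mathrm{Im}[\overline{\tau_l(0)}\,\tau_l'(0)]$. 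The proposition thus reduces to the single identity $\overline{\tau_l(0)}\,\tau_l'(0)=-1/|\gamma_l|$, a real negative number.

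For the final step I would invoke the closed-form evaluations of $P_\nu^\mu(0)$ and $(d/dz)P_\nu^\mu(0)$ as ratios of Gamma functions -- transported from \cite{snow1961} to Ferrer's notation via Definition \ref{def:Ferrer} -- and then apply Euler's reflection formula $\Gamma(z)\Gamma(1-z)=\pi/\sin\pi z$ to show that the Gamma products in $\tau_l(0)\,\tau_l'(0)$ collapse against $\gamma_l=\Gamma(-\nu-l-\tfrac12)\Gamma(\nu-l+\tfrac12)$, leaving $-1/|\gamma_l|$. The main obstacle is precisely this bookkeeping: the piecewise phase $N(\nu)$ is defined on the real line and on the upper/lower half-planes so that the identity holds uniformly across the complementary series ($\nu\in\mathbb{R}$, $M<1$) and the principal series ($\mathrm{Im}\,\nu\neq 0$, $M>1$), and one must verify the cancellations separately in each regime -- in particular, that the imaginary part vanishes only after the $N(\nu)$-regularization is included, not merely because $\nu$ is real. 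The hypothesis $M\neq 1$ is essential because at $M=1$ one has $\nu=-1/2$, the two Ferrer solutions become linearly dependent, and $\gamma_l$ is singular.
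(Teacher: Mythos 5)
Your reduction is correct and the route is a genuine variant of the paper's, though the two converge on the same Gamma-function bookkeeping at the end. The paper does not fix a time slice: it first uses the conjugation relations $[T_\nu^{l+1/2}(z)]^* = \pm T_\nu^{l+1/2}(-z)$ (sign depending on $M\lessgtr 1$) to rewrite the conjugated factor, and then invokes the Wronskian identity \eqref{eq:comut_deriv}, whose left-hand side is $z$-independent --- so slice-independence comes for free and no current-conservation argument is needed. This yields $\braket{u_{l,m},u_{l,m}} = \mp|\gamma_l|/\gamma_l$ at once, and the bulk of the paper's proof is then the sign analysis of $\gamma_l$ via the Gamma recurrence and reflection formulas ($\gamma_l<0$ for $M<1$, $\gamma_l>0$ for $M>1$); the identities for $\braket{v,v}$ and $\braket{u,v}$ are dispatched in one line from $\braket{f^*,g^*}=-\braket{f,g}$ and $v_{l,m}^*=u_{l,-m}$ rather than computed separately as you do. Your $t=0$ evaluation buys a clean reduction to the single scalar identity $\overline{\tau_l(0)}\,\tau_l'(0)=-1/|\gamma_l|$; the paper's route buys the reality of that quantity essentially for free from the conjugation symmetry.

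The step you must actually carry out, and have only sketched, is that identity itself: it is not routine bookkeeping but the entire content of the proposition, since its reality gives $\braket{u,v}=0$ and its sign gives the positivity of $\braket{u,u}$. A shortcut worth noting: evaluating \eqref{eq:comut_deriv} at $z=0$ already gives $\tau_l(0)\tau_l'(0)=1/\gamma_l$ in closed form, so what remains is exactly the paper's work --- relating $\overline{\tau_l(0)}$ to $\tau_l(0)$ in each series and determining the sign of $\gamma_l$. Finally, your remark that the imaginary part vanishes ``only after the $N(\nu)$-regularization is included'' is misplaced: in your own setup $N(\nu)$ enters the three inner products only through $|C_l|^2=|N(\nu)|^2\,|\gamma_l|/(2\alpha)=|\gamma_l|/(2\alpha)$, so it cannot affect whether $\braket{u,v}$ vanishes; the reality of $\overline{\tau_l(0)}\,\tau_l'(0)$ comes from the reflection property of the Ferrer function itself (real on $(-1,1)$ for $\nu\in\mathbb{R}$, purely imaginary at $0$ together with its derivative for $\nu=-\tfrac12+i\rho$).
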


\begin{proof}
Let $z = i\sinh (t/\alpha)$. Then, $\partial_t = \frac{i}{\alpha} \sqrt{1-z^2} \partial_z$. By the usual orthogonality relations on spherical harmonics, we have
\begin{equation*}
\braket{u_{l,m},u_{p,s}} = - (1-z^2)^{\frac{3}{2}} \frac{|\gamma_l|}{2} \left[ \left(\frac{T_\nu^{l+\frac{1}{2}} (z)}{\sqrt[4]{1-z^2}}\right)^*\overleftrightarrow{\partial_z}\left(\frac{T_\nu^{l+\frac{1}{2}} (z)}{\sqrt[4]{1-z^2}} \right) \right] \delta_{lp} \delta_{ms}.
\end{equation*}
If $M<1$, then one can show that $[T_\nu^{l+\frac{1}{2}} (z)]^* = T_\nu^{l+\frac{1}{2}} (-z)$, and hence
\begin{equation}
\label{eq:ortog_parcial}
\braket{u_{l,m},u_{p,s}} = - (1-z^2)^{\frac{3}{2}} \frac{|\gamma_l|}{2} \left[T_\nu^{l+\frac{1}{2}} (-z) \overleftrightarrow{\partial_z}  T_\nu^{l+\frac{1}{2}} (z)     \right].
\end{equation}
By another property of the Ferrer functions, namely
\begin{equation}
\label{eq:comut_deriv}
\small
(1-z^2)\left[ T_\nu^\sigma (z) \frac{d}{dz} T_\nu^\sigma (-z) - T_\nu^\sigma (-z) \frac{d}{dz} T_\nu^\sigma (z)\right] = \frac{2 \sin [(\nu - \sigma)\pi]}{\sin [(\nu + \sigma)\pi]\Gamma(-\nu-\sigma)\Gamma(\nu-\sigma+1)},
\end{equation}
we obtain
\begin{equation}
\label{eq:M<1}
\braket{u_{l,m},u_{p,s}} = - \frac{|\gamma_l|}{\gamma_l} \delta_{lp} \delta_{ms}.
\end{equation}
Now, if $M>1$, one can show that $[T_\nu^{l+\frac{1}{2}} (z)]^* =- T_\nu^{l+\frac{1}{2}} (-z)$ and then
\begin{equation}
\label{eq:M>1}
\braket{u_{l,m},u_{p,s}} = \frac{|\gamma_l|}{\gamma_l} \delta_{lp} \delta_{ms}.
\end{equation}
Now we analyze the signal of $\gamma_l = (-\nu-l-\frac{1}{2})$ for each case. By the Gamma function recurrence property, one gets
\begin{equation*}
\frac{1}{\gamma_l} = \frac{\prod_{p=0}^l\left[\left(-\nu-p-\frac{1}{2}\right) \left(\nu-p+\frac{1}{2}\right)\right]}{\Gamma \left(-\nu -\frac{1}{2}\right)\left(\nu + \frac{1}{2}\right)} = \frac{1}{\gamma_0}\underbrace{\prod_{p=1}^l \left( |1-M^2| +p^2\right)}_{>0}.
\end{equation*}
Then $\gamma_l$ and $\gamma_0$ have the same signal. The latter one satisfy
\begin{equation}
\label{eq:gamma_zero_final}
\frac{1}{\gamma_0} = -\left(\nu + \frac{1}{2}\right)\frac{\cos (\nu \pi)}{\pi}.
\end{equation}
Now, for $M<1$, we necessarily have $\nu \in \left( -\frac{3}{2},-\frac{1}{2} \right) \cup \left( -\frac{1}{2},\frac{1}{2} \right)$. If $\nu \in \left(-\frac{3}{2},-\frac{1}{2}\right)$, then $\cos(\nu \pi)<0$ and $-\left(\nu + \frac{1}{2}\right)>0$, while $\nu \in \left(-\frac{1}{2},\frac{1}{2}\right)$ gives $\cos(\nu \pi)>0$ and $ -\left(\nu + \frac{1}{2}\right)<0$. We conclude that $\gamma_l < 0$ for $M<1$, and we get
\begin{equation}
\braket{u_{l,m},u_{p,s}} = \delta_{lp} \delta_{ms} ,\quad M<1.
\end{equation}
If $M>1$, let $L = \pm \sqrt{M^2 - 1}$. Hence, \eqref{eq:gamma_zero_final} gives
\begin{equation}
\frac{1}{\gamma_0} = \frac{L}{\pi} \sinh (L \pi),
\end{equation}
then $\gamma_0$ is a positive function on the variable $\lambda$, therefore $\gamma_l>0$ and consequently
\begin{equation}
\braket{u_{l,m},u_{p,s}} = \delta_{lp} \delta_{ms} ,\quad M>1.
\end{equation}
Remaining identities $\braket{v_{l,m},v_{p,s}} = - \delta_{lp} \delta_{ms}$ and $\braket{u_{l,m},v_{p,s}} = 0$ are straightforward consequences of $\braket{f^*,g^*} = -\braket{f,g}$ and $v_{l,m}^* = u_{l,-m}$.
\end{proof}

\section{One-particle space, canonical quantization, and symmetries} \label{sec:quantization}

The Hilbert space $\mathcal{H}$ generated by the completion of $\mathcal{S}^+$ is the so-called \emph{one-particle space} is endowed with a form \eqref{eq:prod_int}, which becomes an inner product in this space. The vectors $\phi \in \mathcal{S}^+$ are superpositions of the positive-energy modes, and they can be written as
\begin{equation}
\phi(t, \theta, \varphi) = \sum_{l=0}^\infty \sum_{m=-l}^l \phi_{l,m} u_{l,m}\text{, } \sum_{l=0}^\infty \sum_{m=-l}^l |\phi_{l,m}|^2 =1, \quad  \phi_{l,m} \in \mathbb{C},
\end{equation}
and for every two vectors $\phi, \psi \in \mathcal{S}^+$, we have $\braket{\phi,\psi}= \sum_{l=0}^\infty \sum_{m=-l}^l \phi_{l,m}^* \psi_{l,m}$. The understanding of this space in this work is no different from the usual physical interpretation: $\phi$ is the wavefunction of a single particle represented in space-time structure. In particular, the structure is the de Sitter space, and as mentioned in \cite{Barata2012}, although the notion of particle is not always well-defined in QFTCS, there are cases where we observe particle-like behavior. Moreover, experiments in particle physics are realized in slightly curved, see \cite{Fewster2008} for details.

The quantization in the de Sitter space is the same one described in \cite{Barata2012}, by an analog construction in the Minkowski space via positive-energy modes, see \cite{Birrel1982}: we consider the bosonic (symmetrized tensor products) Fock space $\mathcal{F} := \mathbb{C} \oplus_{n=1}^\infty \left( \mathcal{H}^{\otimes n} \right)_s$, and we consider the KG Lagrangean for the field $\phi$,
\begin{equation}
\mathcal{L} := \frac{1}{2}g^{\sigma \tau} (\partial_\sigma \phi) (\partial_\tau \phi) - \frac{1}{2} \frac{\mu^2}{\hslash^2} \phi^2 
\end{equation}
where we recall that $\mu = m_p^2 + \xi R$ and we use the Einstein summation above. The conjugate momentum for $\phi$, denoted by $\pi_\phi$, is given by $\pi_\phi := \partiald{(D_g \mathcal{L})}{(\partial_t \phi)} = \partial_t \phi$. Then, the massive neutral scalar quantum field is
\begin{equation}
\label{eq:campo_quantizado}
\hat{\phi} (t, \theta, \varphi) = \sum_{l=0}^\infty \sum_{m=-l}^l ( a_{l,m} u_{l,m} + a_{l,m}^* u_{l,m}^*).
\end{equation}
where $u_{l,m}$ the positive-energy modes, $a_{l,m}$ and $a_{l,m}^*$ are the annihilation and creation operators respectively, and these obey the commutation rules
\begin{equation}
\label{eq:comutacao}
[a_{lm},a_{ps}^*] = \delta_{lp} \delta_{ms} \text{ e } [a_{lm},a_{ps}] = [a_{lm}^*,a_{ps}^*] = 0,
\end{equation}
where the vacuum state $\ket{V}$ is the one satisfying $a_{lm}\ket{V}$ for every $l \geq 0$. $-l \leq m \leq l$. We remark that one could use a different positive mode decomposition that is not equivalent to this one, this is a feature of the quantization on curved spacetimes. Non-equivalent representations may generate different vacuum states. In this paper, we choose the representation that is the tri-dimensional version of \cite{Barata2012}, where the vacuum choice corresponds to the one from Bunch and Davies's work \cite{Bunch1978}. The two-point function $G$ for the quantized free field on $dS^3$ is
\begin{equation}
G := \braket{V|\hat{\phi} (t, \theta, \varphi), \hat{\phi} (t',0,0)|V} = \sum_{l=0}^\infty \sum_{m=-l}^l u_{l,m}(t,\theta,\varphi)u_{l,m}(t',0,0)^*.
\end{equation}

\subsection{Action of the isometry group on the positive-energy solutions.}

Now we show that the space $\mathcal{S}^+$ is invariant under actions of $O(3,1)$, that is, it does not depend on the coordinates, and the method is to verify how rotations, Lorentz boosts and the parity and time-reversal operators acts on the elements $u_{l,m}$, since every element of $O(3,1)$ is a composition involving the mentioned operators. An immediate consequence of the invariance is the fact that the completion of $\mathcal{S}^+$, namely $\mathcal{H}$, is an irreducible representation of the de Sitter group. 

Starting with the discrete symmetries, by denoting $P_i$, $i = 1,2,3$, as the parity operator associated with the coordinate $X^i$, and $P = P_1 P_2 P_3$, we obtain via usual relations on the spherical harmonics the following:
\begin{align}
 P_1 u_{l,m} (t, \theta, \varphi) &= u_{l,m} (t, \theta, \pi-\varphi) = \frac{(l+m)!}{(l-m)!} u_{l,-m} (t, \theta, \varphi),\\
 P_2 u_{l,m} (t, \theta, \varphi) &= u_{l,m} (t, \theta, -\varphi) = (-1)^m \frac{(l+m)!}{(l-m)!} u_{l,-m} (t, \theta, \varphi),\\
 P_3 u_{l,m} (t, \theta, \varphi) &= u_{l,m} (t, \pi-\theta, \varphi) = (-1)^{l-|m|} u_{l,m} (t, \theta, \varphi) \label{eq:P3}, \\
 P u_{l,m} (t, \theta, \varphi) &= (-1)^l u_{l,m} (t, \theta, \varphi). \label{eq:P}
\end{align}
Now, the time-reversal operator, denoted by $T$, acts on $u_{l,m}$ as follows: 
\begin{equation}
 Tu_{l,m} (t, \theta, \varphi) = [u_{l,m} (-t, \theta, \varphi)]^* = (-1)^m u_{l,m} (t, \theta, \varphi).
\end{equation}

Now, concerning the continuous symmetries, we present how rotations and Lorentz boosts act on $u_{l,m}$. Let $\SRot$ be the subgroup of spatial rotations on $M^4$, and $U_{ab}(\psi)$, $a,b = 1,2,3$, $a\neq b$, be a counterclockwise rotation normal to the plane $X^a \times X^b$ (usual orientation) by an angle $\psi$, which we will use later. The generators of the rotations are given by
\begin{equation}
 N_{ij} = -\sum_{k=1}^3 \epsilon_{ijk}X^i\partiald{}{X^j},
\end{equation}
where $\epsilon_{ijk}$ is the Levi-Civita symbol. By applying these operators on $u_{l,m}$, we obtain
\begin{align}
  N_{12}u_{l,m} &= -imu_{l,m}, \\
  N_{23}u_{l,m} &= \frac{i}{2} u_{l,m+1} + \frac{i}{2} [l(l+1) - m(m-1)] u_{l,m-1}, \\
  N_{31}u_{l,m} &= \frac{1}{2} u_{l,m+1} - \frac{1}{2} [l(l+1) - m(m-1)] u_{l,m-1} .
\end{align}

It remains to understand the action of the Lorentz boosts on the basis of $\mathcal{S}^+$. The generators of the boosts are the operators $N_{0k}$, $k = 1, 2, 3$, where $k$ corresponds to the direction of axis $X^3$. They are defined below:
\begin{equation}
  N_{0k} = X^k \partiald{}{X^0} + X^0 \partiald{}{X^k}.
\end{equation}
The actions of the operators above on the positive modes basis is 
\begin{align}
  N_{01}u_{l,m} &= i \frac{\sqrt{\left(-\nu -l -\frac{3}{2} \right)\left(\nu -l -\frac{1}{2} \right)}}{2(2l+1)} \big[(l-m+1)(l-m+2) u_{l+1,m-1} - u_{l+1,m+1} \big] \nonumber\\
  & -i \frac{\sqrt{\left(-\nu -l -\frac{1}{2} \right)\left(\nu -l +\frac{1}{2} \right)}}{2(2l+1)} \big[(l+m)(l+m-1) u_{l-1,m-1} - u_{l-1,m+1}\big],\\
  N_{02}u_{l,m} &= - \frac{\sqrt{\left(-\nu -l -\frac{3}{2} \right)\left(\nu -l -\frac{1}{2} \right)}}{2(2l+1)} \big[(l-m+1)(l-m+2) u_{l+1,m-1} + u_{l+1,m+1} \big] \nonumber\\
  & + \frac{\sqrt{\left(-\nu -l -\frac{1}{2} \right)\left(\nu -l +\frac{1}{2} \right)}}{2(2l+1)} \big[(l+m)(l+m-1) u_{l-1,m-1} + u_{l-1,m+1}\big], \\
  N_{03}u_{l,m} &=  i \frac{\sqrt{\left(-\nu -l -\frac{3}{2} \right)\left(\nu -l -\frac{1}{2} \right)}}{2l+1} (l-m+1)u_{l+1,m} \nonumber\\
  & +i \frac{\sqrt{\left(-\nu -l -\frac{1}{2} \right)\left(\nu -l +\frac{1}{2} \right)}}{2l+1} (l+m) u_{l-1,m} .
\end{align}
We conclude that $\mathcal{H}$ is algebraically closed under actions of the symmetry group $O(3,1)$. The Casimir operators that characterize irreducible representations of $O(3,1)_+^\uparrow$ are
\begin{align*}
 Q &= N_{12}^2 + N_{23}^2 + N_{31}^2 - N_{01}^2 - N_{02}^2 - N_{03}^2, \\
 R &=  - N_{01} N_{23} - N_{02} N_{31} - N_{03} N_{12},
\end{align*}
and then
\begin{align*}
 Q = \left(\frac{3}{4} - \nu (\nu+1) \right)1 = \frac{\alpha^2 \mu^2}{\hslash^2} 1 \; \text{ and } \; R =  0. 
\end{align*}
 We recall that the restriction $\mu^2 > 0$ gives $\nu \in \mathbb{R}\cup \left\{-\frac{1}{2}+i \sigma: \sigma \in \mathbb{R}\right\}$. Accordingly to Bargmann \cite{Bargmann1947}, by setting $Q = q1$, the condition $\nu \in \mathbb{R}$ gives $0<q<1$, corresponding the \emph{complementary series}. On the other hand, $\nu \notin \mathbb{R}$ implies $q>1$, corresponding to the \emph{principal series}.

\section{Localizability} \label{sec:localizability}
\subsection{Postulates for localizability}

As we mentioned in the introduction of this work, the idea of localization of relativistic quantum particles was started in \cite{Newton1949}. They postulated conditions that make it possible to construct a unique position operator (see \cite{haag1996} for a more direct approach). In their work, $M^{1+1}$, the particle subspace is the vector space $L^2 \left( \mathbb{R}, \frac{dp}{\omega(p)} \right)$ endowed with the usual inner product, where $\omega(p):= \sqrt{p^2 + m^2}$, in the momentum representation, and $m$ is the mass of the particle. A unitary transformation that absorbs the factor $\omega(p)$ in the measure in the wavefunctions, that is, $\phi(p) \mapsto \phi_{NM}(p) := \frac{\phi(p)}{\sqrt{\omega(p)}} \in L^2 \left( \mathbb{R}, dp \right)$. The Newton-Wigner wavefunction $phi_{NM}(x,t)$ is the Fourier transform of the time evolution of $\phi_{NM}$, that is
\begin{equation}
\label{eq:Fourier}
\phi_{NW} (x,t) = \frac{1}{\sqrt{2 \pi}} \int dp e^{ipx/\hslash} [U_t(\phi_{NW})](p),
\end{equation}
where $U_t$ is the time evolution operator, $U_t:L^2 \left(\mathbb{R}, dp \right) \to L^2 \left(\mathbb{R}, dp \right)$, $U_t(\phi_{NW})(p) := e^{-i \omega(p)t/\hslash} \phi_{NW}(p)$. The probability density function $P(x,t)$ on finding the particle at position $x$ and time $t$ is $P(x,t) := |\phi_{NW} (x,t)|^2$, and the position operator at time $t$, denoted by $q_t$, is given by
\begin{equation}
\label{eq:posicao_NW}
q_t(\phi_{NW} (x,t)) = x \phi_{NW} (x,t).
\end{equation}

In \cite{Barata2012}, since the spatial dimension is one, the subspace of positive-energy modes corresponds to the was constructed by a Fourier expansion on the spatial part (azimuthal angle). Such expansions were approached as an analog to the Fourier transform for de Sitter space in $1+1$ dimensions. In our case, we use the spherical harmonic expansion in the spatial expansion and consequently, we also avoided the problem of the absence of a canonical definition for the momentum space representation. Another similarity between the last mentioned work and this one is that we carry the problem that the Newton-Wigner position operator is not covariant and hence it depends on the reference frame, and it is not in our purposes to solve it in this article.

To define the localizability, We recall that a unitary transformation between two Hilbert spaces $\mathcal{H}_1$ and $\mathcal{H}_2$ is a transformation $W:\mathcal{H}_1 \to \mathcal{H}_2$ that is an isomorphism and an isometry with respect their inner products, i.e.,
\begin{equation*}
\braket{W(f),W(g)}_{\mathcal{H}_2}=\braket{f,g}_{\mathcal{H}_1}.
\end{equation*}

\begin{definition}[Localizability postulates for the tridimensional de Sitter space]
\label{def:postulados}

A free field $\phi$ on the de Sitter is said to be \emph{localizável} when the following occurs:
\begin{itemize}
\item[$\mathbf{I.}$] There exists a family $\{W_t\}_{t \in \mathbb{R}_+}$ of unitary transformations $W_t:\mathcal{H} \to L^2(S^2)$, $W_t(\phi) := \phi_{NW}(t,\theta, \varphi)$, where $L^2(S^2)$ is the Hilbert space of squared-integrable function on the 2-sphere $S^2$. In addition, for each $t \geq 0$, $W_t^{-1}$ depends continuously on the particle mass $m_p$ from equation \eqref{eq:KG};
\item[$\mathbf{II.}$] for a rotation $U(\lambda) \in$ $\SRot$ by an angle $\lambda\in [0,2\pi]$ with respect to any of the three cartesian axes, we have $W_t \circ U(\lambda) = R(\lambda) \circ W_t$, where $R(\lambda)$ is a rotation by an angle $\lambda$ with respect to the same axis and direction of $U$, but acting on em $L^2(S^2)$;
\item[$\mathbf{III.}$] the parity transformation and time-reversal are carried by $W_t$ as follows:
\begin{align*}
P_1 \phi(t, \theta, \varphi) &\to \phi_{NW}(t,\theta, \pi - \varphi),\\
P_2 \phi(t, \theta, \varphi) &\to \phi_{NW}(t,\theta, - \varphi),\\
P_3 \phi(t, \theta, \varphi) &\to \phi_{NW}(t,\pi -\theta,  \varphi),\\
T \phi(t, \theta, \varphi) &\to \phi_{NW}(-t,\theta,\varphi)^*;
\end{align*}
\item[$\mathbf{IV.}$] we have $\phi_{NW}(t, \theta, \varphi) \propto \phi(t, \theta, \varphi)$ in the large mass limit.
\end{itemize} 
\end{definition}

The postulates above are the $3$-dimensional version of the notion of localization system presented in \cite{Barata2012}, and they carry analog intuitive contents as in the $2$-dimensional version that we briefly recall here. In short, postulate I ensures the probabilistic interpretation of the wavefunction $\phi_{NM} \in L^2(S^2)$ for every time $t$, and the continuity of $W_t^{-1}$ is a regularity condition. Postulate II imposes the well-behavior of the Newton-Wigner representation under rotations, that is, rotations on the de Sitter group also rotate the probability amplitudes on the $2$-sphere by the same polar and azimuthal angles. Now, postulate III establishes that discrete symmetries in $\mathcal{H}$ are carried as geometrical transformations in the NW representation. The second Wigner theorem states that quantum symmetries are in general defined up to a phase, and in this work, we set the phase to $1$ so we do not need to deal with complications associated with possible projective representation of the extended de Sitter group. Finally, postulate IV is used to eliminate ambiguities when we determine the transformations $\{W_t\}_{t \in \mathbb{R}_+}$ later in this paper. In the same way as in \cite{Barata2012}, the motivation for this last postulate comes from the fact that the Hermitian form \eqref{eq:prod_int} becomes an inner product in $L^2(S)$ in the large mass limits. In other words,
\begin{equation}
\label{eq:prop_produtos}
\braket{f,g} \simeq 2 \frac{\mu [a(t)]^2}{\hslash} \int f(t,\theta, \varphi)^* g(t, \theta, \varphi) sin \theta d \theta d \phi,
\end{equation}
see appendix for details. Then, it is intuitive to understand $|\phi(t,\theta,\varphi)|^2$ as a probability distribution.

\subsection{Construction of the unitary transformations for the localizability system on the tridimensional de Sitter space}

Now we construct the family $\{W_t\}_{t \in \mathbb{R}_+}$ as in Definition \ref{def:postulados} for the system of a bosonic particle from its positive-energy solutions by applying the localizability postulates.

For every $t \in \mathbb{R}$, there exists a basis $\mathfrak{B}$ of $L^2(S^2)$ whose elements are eigenvectors of the Hermitian generator of the rotations $J:= i N_{12}$ for the axis $X^3$. Such a basis is the set of spherical harmonic functions. Then, we may write the Newton-Wigner function $\phi_{NW} = W_t(\phi)$ of a Klein-Gordon field $\phi$ as an expansion in that basis. Hence,
\begin{equation*}
\phi_{NW} = \sum_{l=0}^\infty \sum_{m=-l}^l q_{lm} (t) Y_l^m(\theta, \varphi) \text{, where } \sum_{l=0}^\infty \sum_{m=-l}^l |q_{lm} (t)|^2 = 1 \text{ and } q_{lm}(t) \in \mathbb{C}.
\end{equation*}
By postulates I and II, we can write
\begin{equation}
\label{eq:rot_S2}
W_t \circ U_{ij} (\alpha) = R_{ij}(\alpha) \circ W_t, \text{ }i,j = 1,2,3, i\neq j, 
\end{equation}
where $\alpha$ is the rotation angle. Each $U_{ij}$ is linear, and so is for the image on $L^2(S^2)$. For rotations on $X^3$, we have
\begin{equation*}
W_t(U_{12}(\alpha)u_{l,m}) =  W_t(e^{-im\alpha} u_{l,m}) = e^{-im\alpha} W_t(u_{l,m})
\end{equation*}
and \eqref{eq:rot_S2} gives
\begin{equation}
\label{eq:autovet}
R_{12}(\alpha) W_t(u_{l,m}) = e^{-im\alpha} W_t(u_{l,m})
\end{equation}
Then, $W_t(u_{l,m})$ is an eigenvector for $R_{12}(\alpha)$ with associated eigenvalue $e^{-im\alpha}$.
Observe that $J$ is non-degenerate and for each $l \in \mathbb{N}_0$, a natural candidate for $W_t (u_{lm})$ is given by
\begin{equation}
\label{eq:candidato}
W_t(u_{lm}(t,\theta, \varphi)) = s_{l,m}(t) Y_l^m(\theta, \varphi), \text{ }b_{l,m}(t) \in \mathbb{C}.
\end{equation}
We show now that the above is the solution. In fact, consider a generic rotation by the Euler angles $\xi$ (axis $X^3$), $\epsilon$ (axis $X^2$) and $\tau$ (axis $X^3$ again), all of them counter-clockwise, denoted by $R(\xi,\epsilon,\tau)$. Its action on the spherical harmonics is
\begin{equation}
\label{eq:rot_geral}
R(\xi,\epsilon,\tau) Y_l^m (\theta, \varphi) = \sum_{k=-l}^l D_{km}^{(l)}(\xi,\epsilon,\tau) Y_l^k(\theta, \varphi),
\end{equation}
where $D_{km}^{(l)} \in \mathbb{C}$, $k \in {-l, -l+1, \cdots, l-1, l}$, is the element of the Wigner matrix \cite{Joshi1977}. By identity \eqref{eq:rot_geral}, for each $l \in \mathbb{N}_0$, the space $\Upsilon_\ell := \spann \{Y_l^m: m \in \mathbb{Z}, |m| \leq l\}$ is invariant under rotations, and the postulate II imposes that the dimension of the space of solutions of \eqref{eq:autovet} is $1$, therefore the solution must be \eqref{eq:candidato}. Moreover, since $W_t$ is unitary, we have
\begin{equation*}
\braket{W_t(u_{l,m}),W_t(u_{l,m})} = |s_{l,m}(t)|^2 = 1,
\end{equation*}
that is, $s_{l,m}(t) = e^{-i\zeta_{l,m}(t)}$, a time-dependent phase, and hence
\begin{equation}
\label{eq:W_parcial}
W_t(u_{l,m}(t,\theta, \varphi)) = e^{-i\zeta_{l,m}(t)} Y_l^m(\theta, \varphi).
\end{equation}
Now, by postulate III we have
\begin{eqnarray}
W_t(u_{l,m}) &=& W_t \left( \frac{(l+m)!}{(l-m)!} P_1  u_{l,-m} \right) = e^{-i \zeta_{l,-m}(t)} Y_l^m (\theta, \varphi), \label{eq:projW}\\
W_t(u_{l,m}) &=& W_t((-1)^m T u_{l,-m}) = e^{i \zeta_{l,-m}(-t)} Y_l^m (\theta, \varphi), \label{eq:timeW}
\end{eqnarray}
and the exact same result in \eqref{eq:projW} occurs for $P_2$. The parity operators $P_3$ and $P$ do not generate results with different indices (see \eqref{eq:P3} and \eqref{eq:P}) and hence they bring no new conditions for $W_t(u_{l,m})$. Identities \eqref{eq:projW} and \eqref{eq:timeW} give
\begin{equation}
\label{eq:condW}
s_{l,m} (t) = s_{l,-m} (t) \text{ and } s_{l,m}(t) = s_{l,-m}(-t)^*
\end{equation}
So far postulates I-III restrict significantly the transformations $W_t$, however, they are still arbitrary with respect to the phase $\zeta_{l,m}$. Analogously to \cite{Barata2012,Newton1949}, we determine $W_0$ and then show that the postulate IV is a sufficient condition for the uniqueness, and $W_t$ is determined via time evolution. For $t=0$, we have
\begin{equation}\label{eq:s_l_m_zero_ambiguous}
s_{l,m} (0) = s_{l,-m} (0) = \pm 1.
\end{equation}
The transformation for $t=0$ is given by
\begin{equation}
\label{eq:W0}
\phi (0, \theta, \varphi) = \sum_{l,m} \phi_{l,m} \sqrt{\frac{\gamma_l}{2 \alpha}}(-i) T_\nu^{l+ \frac{1}{2}}(0) Y_l^m(\theta, \varphi) \mapsto \phi_{NW} (0,\theta, \varphi) = \sum_{l,m} \phi_{l,m} s_{l,m}(0) Y_l^m(\theta, \varphi),
\end{equation}
where, similarly to \cite{Barata2012} we have ambiguity of signal for each term of the series in the LHS. Such an ambiguity appeared for the first time in \cite{philips1968}. The two solutions for $\nu$ in \eqref{eq:nu} imply that the asymptotic limits for large masses are given by $\nu \simeq -\frac{1}{2} \pm iM$, where $M =  \frac{\alpha \mu}{\hslash} \simeq \frac{\alpha m_p}{\hslash}$. To satisfy postulate IV, we necessarily have $\phi_{NW}(0,\theta,\varphi) = f(m_p)\phi(0,\theta,\varphi)$, where $f(m_p)$ is a normalization factor that depends on the particle mass. One can show that
\begin{equation}
\label{eq:T0}
T_\nu^\sigma (0) = \frac{\sqrt{\pi}}{2^\sigma} \frac{\Gamma(\nu + \sigma +1)}{\Gamma(\nu - \sigma +1) \Gamma(\frac{\sigma - \nu +1}{2}) \Gamma(\frac{\nu + \sigma}{2} +1)}, \text{ }\sigma \in \mathbb{C},
\end{equation}
and by \eqref{eq:T0} and the Stirling approximation formula, we obtain
\begin{equation}\label{simeq:postulate_4}
\sqrt{\frac{\gamma_l}{2 \alpha}} N(\nu) T_\nu^{l+ \frac{1}{2}}(0) \simeq (-1)^l (2 \alpha M)^{-\frac{1}{2}}.
\end{equation}
The term $(2 \alpha M)^{-\frac{1}{2}}$ is a coefficient that depends on the particle mass and it normalizes the Hermitian form \eqref{eq:prop_produtos} for $a(0) = \alpha$ in the limit of large mass. The approximation \eqref{simeq:postulate_4} gives $s_{l,m} (0) = (-1)^l$, and therefore the postulates I-IV determines uniquely $W_0$. Observe that $s_{l,m}$ does not depend on $m$, and hence the invariant spaces $\Upsilon_l$ fix their sign for fixed $l$, and from this point we write $s_{l,m} = s_l$. Since postulate I also states that $W_0^{-1}$ be continuous with respect to the particle mass, if the sign in $s_l$ would change, then we would necessarily have a discontinuity. A heuristic discussion concerning such a signal ambiguity is made in subsection \ref{heuristica}.

Now, we determine the time evolution of $\phi_{NM}$ and, consequently, we obtain $W_t$. Although the four postulates in Definition \ref{def:postulados} fix uniquely $W_0$, we remark that the time-evolution is not unique, and in this paper, we discuss one of these solutions, namely the one analog to the Minkowski space \cite{Newton1949} and the bi-dimensional de Sitter space \cite{Barata2012}. 

Considering the positive-energy modes, any vector $\phi \in \mathcal{H}$ can be written as
\begin{equation}\label{eq:time_evolution_NW_function}
\phi(t,\theta,\varphi) = \sum_{l=0}^\infty \sum_{m=-l}^l \phi_{l,m}\sqrt{\frac{\gamma_l}{2 \alpha}}N(\nu) \frac{T_\nu^{l+\frac{1}{2}}(i\sinh (t/\alpha))}{\sqrt{\cosh (t/\alpha)}}Y_l^m(\theta, \varphi),
\end{equation}
where $\sum_{l,m} |\phi_{l,m}| = 1$. By the Hermitian form \eqref{eq:prod_int}, one gets $\braket{\phi,\psi} = \sum_{l,m} \phi_{l,m}^* \psi_{l,m}$. Define $\Xi_l (t)$ and $\zeta_l(t)$, $l \in \mathbb{N}_0$ as
\begin{equation}
\label{eq:fator_absorvido}
\Xi_l (t) := \frac{\alpha \cosh (t/\alpha)}{\gamma_l \left|T_\nu^{l+\frac{1}{2}}(i\sinh(t/\alpha))\right|^2}
\end{equation}
and
\begin{equation}
\label{eq:fase}
\zeta_l (t) := -\arg \left( N(\nu) \frac{T_\nu^{l+\frac{1}{2}}(i \sinh (t/\alpha)}{\sqrt{\cosh(t/\alpha)}} \right).
\end{equation}
Now define the time-dependent unitary transformation $W_t$ by
\begin{equation}
\label{eq:transf}
\phi (t,\theta,\varphi) \to \phi_{NW} (t,\theta,\varphi) = \sum_{l=0}^\infty \sum_{m=-l}^l \phi_{l,m} e^{-i \zeta_l (t)} Y_l^m(\theta,\varphi),
\end{equation}
i.e., $W_t$ absorbs $[2\Xi_l(t)]^{-\frac{1}{2}}$ on each coefficient $\phi_{l,m}$, a consequence of the postulates I-III, and includes the phase $e^{-i \zeta_l (t)}$ which the time-evolution of each energy mode.

We verify the validity of the postulates on $W_t$. By construction, the solution \eqref{eq:time_evolution_NW_function} satisfies postulate I, and partially II and III. For II and III, it remains to prove that the phase factor satisfies the two identities in \eqref{eq:condW} for every $t$. The first one is straightforward, via the independence between the phase and the order of the spherical harmonics. The second one is proved by using the identity $[\pm iT_\nu^{l+\frac{1}{2}} (i\sinh (t/\alpha))]^* = \pm iT_\nu^{l+\frac{1}{2}} (i\sinh (t/\alpha))$. To prove postulate IV, first, we observe that the recurrence property of the gamma function gives $e^{-i \zeta_l (t)} = (-1)^l$.

For $t \neq 0$, the Stirling approximation on the large mass limit gives
\begin{equation}
\label{eq:gamma_final}
\sqrt{\frac{\gamma_l}{2 \alpha}} = \frac{|\Gamma(-l + iM)|}{\sqrt{2 \alpha}} \simeq \sqrt{\frac{\pi}{\alpha}} M^{-l-\frac{1}{2}}e^{-\pi M/2}.
\end{equation}
The asymptotic formula for $P_\nu^{l+\frac{1}{2}}(z)$,
\begin{equation*}
P_\nu^\sigma (z) \simeq \frac{\nu^{\sigma-1/2}}{\sqrt{2 \pi} (z^2 - 1)^{1/4}} \left[-e^{\pm i \left(\sigma - 1/2 \right) \pi} e^{i\left(\nu + 1/2 \right) \omega} + e^{-i\left(\nu + 1/2 \right) \omega}\right], \text{ }\nu_2 \to \pm \infty, \text{ }z = \cos \omega.
\end{equation*}
gives for $t>0$ (observe that $\omega \simeq -\frac{\pi}{2}+i\frac{t}{\alpha}$ for the Riemann surface chosen for the asymptotic limit) the following:
\begin{equation}
\label{eq:T_assintotico}
\frac{T_\nu^{l+\frac{1}{2}}(i\sinh (t/\alpha))}{\sqrt{\cosh(t/\alpha)}} \simeq \frac{(-1)^lM^l e^{\pi M/2}e^{-iMt/\alpha}}{\sqrt{2\pi}\cosh(t/\alpha)}.
\end{equation}
For $t<0$ the approximation \eqref{eq:T_assintotico} is the same except by an extra factor $i$, however, we always may refix the initial time in $t=0$ and evolve only for increasing time. Hence, \eqref{eq:gamma_final} and \eqref{eq:T_assintotico} give
\begin{equation}
\label{T_final}
\sqrt{\frac{\gamma_l}{2 \alpha}} \frac{T_\nu^{l+\frac{1}{2}}(i\sinh (t/\alpha))}{\sqrt{\cosh(t/\alpha)}} \simeq (-1)^l (2\alpha M [\cosh(t/\alpha)]^2)^{-\frac{1}{2}} e^{-iMt/\alpha}
\end{equation}
where $(2\alpha M [\cosh(t/\alpha)]^2)^{\frac{1}{2}}$ corresponds to a mass-dependent normalization coefficient, present in \eqref{eq:prop_produtos}.

Since the function $\phi_{NW}$ in \eqref{eq:transf} belongs to $L^2(S^2)$, its physical meaning corresponds to $|\phi_{NW}|^2$ be a probability density on $S^2$ on finding the particle. However, the actual radius of the model is the time slice at time $t$, and then it varies with the time, and it is given by the scale factor $a(t) = \alpha \cosh (t/\alpha)$. Therefore, the probability density is obtained by absorbing the factor $a(t)$, hence
\begin{equation}
\label{eq:transformacao_almost}
\tilde{\phi}_{NW} (t, \theta, \varphi) = \frac{1}{\alpha \cosh(t/\alpha)}\sum_{l=0}^\infty \sum_{m=-l}^l \phi_{l,m} e^{-i\zeta_l(t)} Y_l^m (\theta, \varphi).
\end{equation}
And the final absorbed factor $[2\omega_l^{dS}(t)]^{-\frac{1}{2}}$ is
\begin{equation}
\omega_l^{dS}(t):= \frac{1}{\gamma_l \left|T_\nu^{l+\frac{1}{2}}(i\sinh(t/\alpha))\right|^2}.
\end{equation}
Consequently,
\begin{equation}
\label{eq:deriv_zeta2}
\zeta_l'(t) = \frac{1}{\gamma_l a(t) \left|T_\nu^{l+\frac{1}{2}}(i \sinh (t/\alpha) \right|^2} = \frac{\omega_l^{dS} (t)}{a(t)},
\end{equation}
and $\zeta_l'$ differs from the bi-dimensional case \cite{Barata2012} by a factor $\frac{1}{a(t)}$ in $\omega_l^{dS}$. Therefore,
\begin{equation}
\label{eq:objetivo_do_trabalho}
W_t(\phi) := \sum_{l=0}^\infty \sum_{m=-l}^l \sqrt{2\omega_l^{dS}(t)}  \phi_{l,m} \sqrt{\frac{\gamma_l}{2 \alpha}} \frac{N(\nu) T_\nu^{l+\frac{1}{2}}(i\sinh (t/\alpha))}{\sqrt{\cosh (t/\alpha)}}Y_l^m(\theta, \varphi)
\end{equation}
is the transformation that satisfies the localizability conditions for the model of a bosonic particle in the tridimensional de Sitter space. Note that $W_t$ works for both principal and complementary series, but the latter one is non-oscillatory since it is a real series, and then $e^{-i\zeta_l(t)}$ is constant for every $l \in \mathbb{N}_0$, i.e., the dynamics is trivial. We recall that the complementary series is given by $\alpha^2 \mu^2/\hslash^2 < 1$. The physical interpretation for such inequality is the same as in \cite{Barata2012}: the complementary series has the Compton wavelength larger than the de Sitter radius, making the particle confined, while the principal series has the opposite situation. As a consequence, the principal series can be evolved in time for the study of the motion of wavepackets on the semi-classical regime, by analyzing how probability densities of localized wavepackets diffuse through time. 


\subsection{The position operator}

Now we explicit the position operator for the tridimensional de Sitter space, by starting from similar construction done in \cite{Newton1949} and recalled in \eqref{eq:posicao_NW}. We have
\begin{equation*}
\vec{q_t} \phi_{NW} (t,\theta, \varphi) := \hat{r} \phi_{NW} (t,\theta, \varphi),
\end{equation*}
where $\hat{r} = \hat{r} (\theta, \varphi)$ is the unitary vector in $S^2$.
\begin{equation}
\label{eq:versor_HE}
\hat{r} (\theta, \varphi) = \begin{pmatrix}
                        \cos \varphi \sin \theta \\
                        \sin \varphi \sin \theta \\
                        \cos \theta
                        \end{pmatrix}
                        = \begin{pmatrix}
                        \frac{1}{2} \left[ Y_1^1(\theta, \varphi)+ Y_1^{-1}(\theta, \varphi) \right]\\
                        \frac{i}{2} \left[ Y_1^1(\theta, \varphi) - Y_1^{-1}(\theta, \varphi) \right] \\
                        Y_1^0(\theta, \varphi)
                        \end{pmatrix}.
\end{equation}
We denote the $j$-th coordinate of the position operator by $\hat{X}^j_B$, where $B = L^2(S^2)$ or $\mathcal{H}$. The method of obtaining the position operator is to describe $\hat{X}^j_{L^2(S^2)} \phi_{NM}$ in the spherical harmonics basis of $L^2(S^2)$, and then bring it back to $\mathcal{H}$ via $W_t^{-1}$, i.e., $\hat{X}_{\mathcal{H}}^j \phi = W_t^{-1} \left(\hat{X}_{L^2(S^2)}^j \phi_{NW} \right)$. The position operator $\hat{R}$ on $\mathcal{H}$, $\hat{R} = (\hat{X}_{\mathcal{H}}^1, \hat{X}_{\mathcal{H}}^2, \hat{X}_{\mathcal{H}}^3)^T$ is given by
\begin{align}\label{eq:X1} 
 \left[ \hat{X}_{\mathcal{H}}^1 \phi \right](t,\theta, \varphi) &= \sum_{p,l=0}^\infty \sum_{s = -p}^p \sum_{m = -l}^l (-1)^s  \phi_{l,m} e^{-i \zeta_l (t)} \sqrt{\frac{3(2l+1)(2p+1)}{32 \pi \omega_p^{dS}(t)}}  \begin{pmatrix} 
1 & l & p \\
0 & 0 & 0
\end{pmatrix}\\
&\times
\left[ \begin{pmatrix}
1 & l & p \\
1 & m & -s
\end{pmatrix}+
\begin{pmatrix}
1 & l & p \\
-1 & m & -s
\end{pmatrix}\right]\nonumber,
\end{align}

\begin{align}\label{eq:X2} 
 \left[ \hat{X}_{\mathcal{H}}^2 \phi \right](t,\theta, \varphi) &= i \sum_{p,l=0}^\infty \sum_{s = -p}^p \sum_{m = -l}^l (-1)^s  \phi_{l,m} e^{-i \zeta_l (t)} \sqrt{\frac{3(2l+1)(2p+1)}{32 \pi \omega_p^{dS}(t)}}  \begin{pmatrix} 
1 & l & p \\
0 & 0 & 0
\end{pmatrix}\\
&\times
\left[ \begin{pmatrix}
1 & l & p \\
1 & m & -s
\end{pmatrix}-
\begin{pmatrix}
1 & l & p \\
-1 & m & -s
\end{pmatrix}\right] \nonumber,
\end{align}

\begin{align}\label{eq:X3} 
 \left[ \hat{X}_{\mathcal{H}}^3 \phi \right](t,\theta, \varphi) &= \sum_{p,l=0}^\infty \sum_{s = -p}^p \sum_{m = -l}^l (-1)^s  \phi_{l,m} e^{-i \zeta_l (t)} \sqrt{\frac{3(2l+1)(2p+1)}{8 \pi \omega_p^{dS}(t)}}   \begin{pmatrix} 
1 & l & p \\
0 & 0 & 0
\end{pmatrix}\\
&\times
\begin{pmatrix}
1 & l & p \\
0 & m & -s
\end{pmatrix}, \nonumber
\end{align}
where the matrices above are $3$-$j$ Wigner symbols. The uniqueness of $\hat{R}$ is given by the uniqueness of the Newton-Wigner representation.

\subsection{Physical understanding of the signal ambiguity}\label{heuristica}

In a similar fashion done in \cite{Barata2012}, we discuss the two possible solutions in \eqref{eq:s_l_m_zero_ambiguous} before we impose postulate IV. We concluded previously in this paper that the postulates I-III are not sufficient to determine a unique solution for the localizability of the one-particle space via Newton-Wigner representation.

For simplicity, we assume already known that the choice of signal in $s_{l,m}$ does not depend on $m$, i.e., $s_{l,m} = s_l$. From a fixed choice of signals $s_l \in \{-1,1\}$, a change $s_l \mapsto s'_l = (-1)^l$ implies a rotation on the polar angle $\theta \mapsto \pi - \theta$, and it consists into an antipodal reflection, i.e., with respect to the plane $X^1 \times X^2$. It is sufficient to study the signal non-uniqueness via the action of $W_0$ on a large enough set of states chosen with some criteria and from this choice to select signals that avoid the antipodal reflections. As an example for the choice of states, let $l = 0$ (hence $m = 0$) and $l=1$ with $m=1$, with coefficients $a_0 = s_0$ and $a_1 = \frac{s_1}{\sqrt{3}}$ respectively. Then,
\begin{eqnarray}
\phi(0,\theta,\varphi) &=& \sqrt{\frac{\gamma_0}{2 \alpha}} N(\nu)T_\nu^{\frac{1}{2}}(0) s_0 \frac{1}{2}\sqrt{\frac{1}{\pi}} + \sqrt{\frac{\gamma_1}{2 \alpha}} N(\nu)T_\nu^{\frac{3}{2}}(0) s_1 \frac{1}{2}\sqrt{\frac{1}{\pi}}, \\
\phi_{NW} (0,\theta,\varphi) &=& \frac{1+\cos \theta}{2 \sqrt{\pi}}. \label{eq:exampleNW}
\end{eqnarray}
The function \eqref{eq:exampleNW} is monotonically decreasing with $\theta$, with maximum and minimum (zero) points attained at $\theta = 0$ and $\theta = \pi$ respectively. Note these extremal points are antipodal to each other. Then, the state above describes a particle more likely to be found in the positive hemisphere of $X^3$. On the other hand, from \eqref{eq:T0} with $A = \sqrt{1-M^2} \in (0,1)\cup i\mathbb{R}$ gives
\begin{equation*}
N(\nu) T_\nu^{l+\frac{1}{2}}(0) = (-1)^l \frac{\sqrt{\pi}}{2^{l+\frac{1}{2}}} (l+A) \frac{ \Gamma \left(l + A \right)}{\Gamma(l-A)} \frac{1}{\Gamma \left(\frac{l}{2} + 1 + \frac{A}{2} \right) \Gamma \left(\frac{l}{2} + 1 - \frac{A}{2} \right)},
\end{equation*}
and the term $(-1)^l$ is the only factor in $T_\nu^{l + \frac{1}{2}}(0)$ that is not necessarily positive, that is, $N(\nu) T_\nu^{l+\frac{1}{2}}(0)$ has alternating signals on $l$, allowing two options for signals, namely $s_0 = s_1$ and $s_0 \neq s_1$. The choice $s_0 = s_1$ gives
\begin{equation}
\label{eq:antipoda}
\phi (0,\theta,\varphi) \propto \frac{1-\cos \theta}{2 \sqrt{\pi}},
\end{equation}
that is, such a choice gives a term similar to \eqref{eq:exampleNW} except by the exchange $\cos \theta \mapsto -\cos \theta$. The function \eqref{eq:antipoda} increases monotonically with a minimum (zero) and maximum in $\theta = 0$ and $\theta = \pi$. In other words, the region of non-trivial likelihood of finding the particle given by $\phi$ is the antipodal region of higher probability corresponding to $\phi_{NM}$. The conclusion is straightforward: we need to choose $s_0 \neq s_1$ in order to keep the probability distributions coherent, in the sense that the unitary transformations $W_t$ must preserve the region on which $\phi$ is concentrated, which also happens in \cite{Barata2012} for the bi-dimensional case. In addition, observe that the example above is not the only combination that works for $l = 0$ and $l=1$, and the cases may include the variable $\varphi$ and the analysis must be done for each $\varphi \in [0, 2 \pi]$ fixed, the example was chosen by its simplicity.

From another perspective, we may choose the signals via a `maximal localization' condition of the position on the eigenstates: let $\{\delta_{NW}^L\}_{L \in \mathbb{N}_0}$ be a sequence of functions $\delta_{NW}^L: S^2 \to \mathcal{C}$ given by
\begin{equation}
\label{eq:seq_delta}
\delta_{NW}^L (\theta - \theta_0, \varphi - \varphi_0) := \sum_{l=0}^L \sum_{m=-l}^l Y_l^m(\theta_0,\varphi_0)^* Y_l^m (\theta, \varphi), \text{ }\theta_0 \in [0,\pi], \text{ }\varphi_0 \in [0,2\pi].
\end{equation}
The sequence \eqref{eq:seq_delta} converge in the distributional sense for $L \to \infty$ to the Dirac distribution $\delta_{NW}(\theta, \varphi)$ on $S^2$, i.e.,
\begin{equation}
\label{eq:Dirac_delta}
\delta_{NW}(\theta - \theta_0, \varphi-\varphi_0) := \sum_{l=0}^\infty \sum_{m=-l}^l Y_l^m(\theta_0,\varphi_0)^* Y_l^m (\theta, \varphi) = \delta(\cos \theta - \cos \theta_0) \delta (\varphi - \varphi_0),
\end{equation}
where for $\theta_0 \in \{0,\pi\}$, the series above does not depend on $\varphi$. In this case, the sequence simply omits the term $\delta (\varphi - \varphi_0)$. For $\theta_0 = 0$, we have
\begin{equation}
\label{eq:seq_delta_0}
\delta_{NW} (\theta, \varphi) = \sum_{l=0}^\infty Y_l^0(0,0)^* Y_l^0 (\theta, \varphi) = \sum_{l=0}^\infty \left( \frac{2l+1}{4\pi} \right) P_l (\cos \theta) = \delta (\cos \theta - 1),
\end{equation}
where $P_l \equiv P_l^0$. By allowing freedom for the coefficients $S_l(0)$, the functions \eqref{eq:seq_delta}, when $\theta = \theta_0 = 0$, become
\begin{equation}
\delta^L(0,0, \varphi) = \sum_{l=0}^L s_l(0) \sqrt{\frac{\gamma_0}{2 \alpha}} N(\nu)T_\nu^{l+\frac{1}{2}}(0) \left( \frac{2l+1}{4\pi} \right)^{\frac{1}{2}} Y_l^0 (0, \varphi).
\end{equation}
Since $\sgn \left( N(\nu)T_\nu^{l+\frac{1}{2}}(0)\right) = (-1)^l$, the choice $s_l(0) = (-1)^l$ maximizes $|\delta^L(0,0, \varphi)|$ for every $L$, and then $\delta_{NW} (\theta, \varphi)$ is the most concentrated as possible for $\theta = 0$ em $t=0$, obeying the de Sitter group symmetries.

An important difference between the localizability in the bi-dimensional \cite{Barata2012} and the tri-dimensional de Sitter case concerning the choice of the signals is the geometrical nature of the problem: in \cite{Barata2012}, the different choices of the signal correspond to diametrically opposite points in $S^1$, while in this work the different positions in $S^2$ are reflections of each other with respect to the axis $X^3$. Of course, this is a consequence of the choices do not depend on the order of the harmonic spherical functions, which may point out that the choice of signals fixes the marginal probabilities by integration over the azimuthal angle $\varphi$.

\section{Concluding remarks}

This paper extends to the tri-dimensional case of the postulates of localizability presented in \cite{Barata2012}, and we determined uniquely the family of unitary transformations that realizes the determination of a position operator. Moreover, we explicitly presented this operator. Our results open the possibility of extending even further the theory to a general formulation of the localization postulates for $n+1$-dimensional de Sitter space.

\section*{Acknowledgements}
TR is supported by CNPq and NCN (National Science Center, Poland), Grant 2019/35/D/ST1/01375. We thank N. Yokomizo for fruitful discussions during the development of this work. In addition, TR thanks R. Correa da Silva and J. F. Thuorst for further discussions and friendship.

\bibliographystyle{alpha}
\bibliography{bibliografia}

\end{document}